\documentclass[11pt]{article}
\usepackage{amsfonts,amsmath,amsthm,amscd,amssymb,latexsym}
\usepackage[active]{srcltx}                    

\theoremstyle{plain}
\newtheorem{thm}{Theorem}[section]

\newtheorem{lem}[thm]{Lemma}
\newtheorem{prop}[thm]{Proposition}

\theoremstyle{definition}
\theoremstyle{remark}
\numberwithin{equation}{section}
\newcommand{\keywords}{\textbf{Keywords: }\medskip}
\newcommand{\subjclass}{\textbf{Math. Subj. Clas.: }\medskip}



\title
 {A discrete  Dirac-K\"{a}hler  equation \\
using a geometric  discretisation  scheme}

\author{ Volodymyr Sushch \\  \\ \small Koszalin University of Technology, \\ \small Sniadeckich 2, 75-453 Koszalin, Poland \\
\small  e-mail: volodymyr.sushch@tu.koszalin.pl  }



\date{\small January 31, 2018}

\begin{document}

\maketitle

\begin{abstract}
Discrete models of the Dirac-K\"{a}hler equation and the Dirac equation in the Hestenes form are discussed.
A discrete version of the plane wave solutions to a discrete analogue of the Hestenes equation is established.
\end{abstract}

\keywords{Dirac-K\"{a}hler equation, Hestenes equation, discrete models, Clifford multiplication, plane wave solution}

\subjclass{39A12, 81Q05, 39A70}

\section{Introduction}
We study a  discrete model of the  Dirac-K\"{a}hler  equation in which  key geometric aspects of the continuum counterpart are captured.
 We pay special attention to the description of some method of construction a discretization scheme based on the use of the differential forms calculus.
 The aim of this paper is to establish some discrete version of the plane wave solutions  to a discrete Dirac equation  in the Hestenes form.
 To construct these discrete solutions we introduce a Clifford product acting on the space of discrete inhomogeneous forms.
 This work is a direct continuation of  that described in my previous papers \cite{S1,S2, S3, S4}. In \cite{S4}, a correspondence between a discrete  Dirac-K\"{a}hler  equation and a discrete analogue of the Hestenes equation was studied.
 There are several approaches to study of discrete versions of the Dirac-K\"{a}hler equation based on the use  a discrete Clifford calculus framework on lattices. For a review of discrete Clifford analysis, we refer the reader to \cite{FKS,F2, F3, Kanamori, Vaz}.
 It is beyond the scope of this paper to fully discuss differences and intersections between approaches.

We first briefly review some definitions and basic facts  on the Dirac-K\"{a}hler equation \cite{Kahler, Rabin} and the Dirac equation
 in the spacetime algebra \cite{H1, H2}.
Let $M={\mathbb R}^{1,3}$ be  Minkowski space with  metric signature  $(+,-,-,-)$.
Denote by $\Lambda^r(M)$ the vector space of smooth differential $r$-forms, $r=0,1,2,3,4$. We consider  $\Lambda^r(M)$ over $\mathbb{C}$.
Let $\omega,\ \varphi\in\Lambda^r(M)$.
  The inner product is defined by
\begin{equation}\label{1.1}
(\omega, \ \varphi)=\int_{M}\omega\wedge\ast\overline{\varphi},
\end{equation}
where $\wedge$ is the exterior product and $\ast$ is the Hodge star operator  $\ast:\Lambda^r(M)\rightarrow\Lambda^{4-r}(M)$ with respect to the Lorentz metric.
Let $d:\Lambda^r(M)\rightarrow\Lambda^{r+1}(M)$ be the exterior differential and let $\delta:\Lambda^r(M)\rightarrow\Lambda^{r-1}(M)$ be the formal adjoint of $d$  with respect to  \eqref{1.1}. We have $$\delta=\ast d\ast.$$
 Denote by $\Lambda(M)$ the set of all differential forms on $M$. We have
\begin{equation*}
\Lambda(M)=\Lambda^{ev}(M)\oplus\Lambda^{od}(M),
\end{equation*}
where $\Lambda^{ev}(M)=\Lambda^0(M)\oplus\Lambda^2(M)\oplus\Lambda^4(M)$  and  $\Lambda^{od}(M)=\Lambda^1(M)\oplus\Lambda^3(M)$.
Let $\Omega\in\Lambda(M)$
be an inhomogeneous differential form, i.e.
$$\Omega=\sum_{r=0}^4\overset{r}{\omega},$$
where $\overset{r}{\omega}\in\Lambda^r(M)$.
 The Dirac-K\"{a}hler equation for a free electron is given by
\begin{equation}\label{1.2}
i(d+\delta)\Omega=m\Omega,
\end{equation}
where $i$ is the usual complex unit  and  $m$  is a mass parameter.
It is easy to show that Eq.~\eqref{1.2} is equivalent to the set of equations
\begin{eqnarray*}\label{}
i\delta\overset{1}{\omega}=m\overset{0}{\omega},\\
i(d\overset{0}{\omega}+\delta\overset{2}{\omega})=m\overset{1}{\omega},\\
i(d\overset{1}{\omega}+\delta\overset{3}{\omega})=m\overset{2}{\omega},\\
i(d\overset{2}{\omega}+\delta\overset{4}{\omega})=m\overset{3}{\omega},\\
id\overset{3}{\omega}=m\overset{4}{\omega}.
\end{eqnarray*}
Denote by $\Omega^{ev}$ and by $\Omega^{od}$  the even and odd parts of $\Omega$. It is clear that
\begin{align*}\label{}
i(d+\delta)\Omega^{od}=m\Omega^{ev}, \\
i(d+\delta)\Omega^{ev}=m\Omega^{od}.
\end{align*}
The operator $d+\delta$ is the analogue of the gradient operator in Minkowski spacetime
$$\nabla=\sum_{\mu=0}^3\gamma_\mu\partial^\mu, \quad \mu=0,1,2,3,$$
 where $\gamma_\mu$ is the Dirac gamma matrix.
 Think of $\{\gamma_0, \gamma_1, \gamma_2, \gamma_3\}$ as a vector basis in spacetime. Then the  gamma matrices $\gamma_\mu$ can be considered as generators of the Clifford algebra of spacetime $\mathcal{C}\ell(1,3)$ \cite{B, B1}. This algebra Hestenes \cite{H2} calls  the spacetime algebra.   Denote by $\emph{C}\ell_{\mathbb{R}}(1,3)$ $(\emph{C}\ell_{\mathbb{C}}(1,3))$ the real (complex) Clifford algebra. It is known that an inhomogeneous form $\Omega$ can be represented as element of $\mathcal{C}\ell_{\mathbb{C}}(1,3)$.  Then the Dirac-K\"{a}hler equation can be written as the algebraic equation
 \begin{equation}\label{1.3}
  i\nabla\Omega=m\Omega, \quad \Omega\in\emph{C}\ell_{\mathbb{C}}(1,3).
 \end{equation}
  Eq.~\eqref{1.3} is equivalent to the four Dirac equations (traditional column-spinor equations) for a free electron.  Let $\emph{C}\ell^{ev}(1,3)$ be the even subalgebra of the  algebra $\emph{C}\ell(1,3)$. The equation
 \begin{equation}\label{1.4}
-\nabla\Omega^{ev}\gamma_1\gamma_2=m\Omega^{ev}\gamma_0, \quad \Omega^{ev}\in\emph{C}\ell_{\mathbb{R}}^{ev}(1,3),
 \end{equation}
is called the Hestenes form of the Dirac equation \cite{H1, H2}.

It should be noted  that the grade algebra $\Lambda(M)$ endowed with the Clifford multiplication is an example of the the Clifford algebra.
 In this case the basis covectors $e^\mu=dx^\mu$ of spacetime are considered as generators of the Clifford algebra. Then the Hestenes  equation  \eqref{1.4} can be rewritten in terms of inhomogeneous forms as
 \begin{equation}\label{1.5}
 -(d+\delta)\Omega^{ev} e^1e^2=m\Omega^{ev} e^0, \quad \Omega^{ev}\in\Lambda_{\mathbb{R}}^{ev}(M).
 \end{equation}
\section{Discretization Scheme}
The starting point for consideration is a combinatorial model of Euclidean space. The proposed approach was originated by Dezin \cite{Dezin}.
For the convenience of the reader we briefly repeat the relevant material from \cite{S2}
without proofs, thus making our presentation self-contained.

Let the tensor product
\begin{equation*}
C(4)=C\otimes C\otimes C\otimes C
\end{equation*}
of a   1-dimensional complex be a combinatorial model of Euclidean space
 $\mathbb{R}^4$.
  The 1-dimensional complex $C$ is defined in the following way.
 Introduce  the sets  $\{x_\kappa\}$ and $\{e_{\kappa}\}$, $\kappa\in {\mathbb Z}$.
Let $C^0$ and $C^1$ be the free abelian groups of 0-dimensional and 1-dimensional
chains generated by $\{x_\kappa\}$ and $\{e_{\kappa}\}$.
The free abelian group is understood as the direct sum of infinity cyclic groups generated by $\{x_\kappa\}$, $\{e_{\kappa}\}$.
Arbitrary elements (chains) $a\in C^0$ and $b\in C^1$
can be written as the formal sums
$$
a=\sum_{\kappa}a^{\kappa}x_{\kappa}, \qquad  b=\sum_{\kappa}b^{\kappa}e_{\kappa}, \qquad a^{\kappa}, b^{\kappa}\in {\mathbb Z}.
$$
It is convenient to
introduce the shift operator  $\tau$ in the set of indices by
\begin{equation}\label{2.1}
\tau\kappa=\kappa+1.
\end{equation}
We define the boundary operator $\partial: C^1\rightarrow C^0, \quad  \partial: C^0\rightarrow 0,$
 setting
\begin{equation}\label{2.2}
\partial e_\kappa=x_{\tau\kappa}-x_\kappa, \qquad \partial x_\kappa=0.
\end{equation}
The definition \eqref{2.2} is extended to arbitrary chains by linearity.

The direct sum $C=C^0\oplus C^1$ with the boundary operator $\partial$
defines the 1-dimensional complex. It is known that a free abelian group is an abelian group with basis.   One can regard the sets $\{x_\kappa\}$,  $\{e_{\kappa}\}$ as  sets of basis elements of the groups  $C^0$ and $C^1$. Geometrically we can interpret the 0-dimensional basis elements $x_\kappa$ as points of the real line and the 1-dimensional basis elements $e_\kappa$ as open intervals between points. We call the complex  $C$ a combinatorial real line.

Multiplying the basis elements $x_\kappa$, $e_\kappa$ in various way we obtain
basis elements of $C(4)$.
Let $s_k^{(r)}$ be an arbitrary
$r$-dimensional basis element $C(4)$. Then we have
$$s_k^{(r)}=s_{k_0}\otimes s_{k_1}\otimes s_{k_2}\otimes s_{k_3},$$ where $s_{k_\mu}$
is either  $x_{k_\mu}$ or  $e_{k_\mu}$ and $k=(k_0,k_1,k_2,k_3)$,  $k_\mu\in \Bbb{Z}$.
The dimension $r$ of a basis element $s_k^{(r)}$ is given by the number of factors $e_{k_\mu}$ that appear in it.
The product  contains exactly $r$ of $1$-dimensional elements $e_{k_\mu}$ and $4-r$
of $0$-dimensional elements  $x_{k_\mu}$,  and  the superscript $(r)$ indicates also a position  of $e_{k_\mu}$ in $s_k^{(r)}$.
For example, the 1- and 2-dimensional basis elements
of $C(4)$ can be written as
\begin{align*}
e_k^0=e_{k_0}\otimes x_{k_1}\otimes x_{k_2}\otimes x_{k_3}, \nonumber \qquad
e_k^1=x_{k_0}\otimes e_{k_1}\otimes x_{k_2}\otimes x_{k_3}, \\
e_k^2=x_{k_0}\otimes x_{k_1}\otimes e_{k_2}\otimes x_{k_3},  \qquad
e_k^3=x_{k_0}\otimes x_{k_1}\otimes x_{k_2}\otimes e_{k_3} \
\end{align*}
and
\begin{align*}
e_k^{01}=e_{k_0}\otimes e_{k_1}\otimes x_{k_2}\otimes x_{k_3},  \qquad
e_k^{12}=x_{k_0}\otimes e_{k_1}\otimes e_{k_2}\otimes x_{k_3},\nonumber \\
e_k^{02}=e_{k_0}\otimes x_{k_1}\otimes e_{k_2}\otimes x_{k_3}, \qquad
e_k^{13}=x_{k_0}\otimes e_{k_1}\otimes x_{k_2}\otimes e_{k_3}, \nonumber \\
e_k^{03}=e_{k_0}\otimes x_{k_1}\otimes x_{k_2}\otimes e_{k_3}, \qquad
e_k^{23}=x_{k_0}\otimes x_{k_1}\otimes e_{k_2}\otimes e_{k_3}.
\end{align*}
Let $C(4)=C(p)\otimes C(q)$, where $p+q=4$ and $C(p)$ is the tensor product of $p$ factors of $C=C(1)$, $p=1,2,3$.
The definition \eqref{2.2} of  $\partial$  is  extended to  arbitrary chains of $C(4)$ by the rule
\begin{equation}\label{2.3}
\partial(a\otimes b)=\partial a\otimes b+(-1)^ra\otimes\partial b,
\end{equation}
where $a\in C(p)$ and $b\in C(q)$ and $r$ is the dimension of the chain $a$.
It is easy to check that  $\partial\partial a=0$ for any $a\in C(4)$.

Suppose that the combinatorial model of Minkowski space has the same structure
as $C(4)$.  We will use the index $k_0$ to denote the basis elements of $C$ which correspond to the time coordinate of
$M$. Hence, the indicated basis elements will be written as  $x_{k_0}$, $e_{k_0}$.

Let us introduce the construction of a double complex. This construction generalizes that of \cite{S2}. Together with
the complex $C(4)$ we consider its double, namely the complex
$\tilde{C}(4)$ of exactly the same structure.
The statement that $C(4)$ and $\tilde{C}(4)$ have the same structure means that the corresponding chains $a\in C(4)$ and $\tilde{a}\in \tilde{C}(4)$ have  the same coefficients.
Define the one-to-one
correspondence
\begin{equation*}
\ast^c : C(4)\rightarrow\tilde{C}(4), \qquad \ast^c : \tilde
C(4)\rightarrow C(4)
\end{equation*}
by the rule
\begin{equation}\label{2.4}
\ast^cs_k^{(r)}=\varepsilon(r)\tilde s_k^{(4-r)}, \qquad \ast^c\tilde s_k^{(r)}=\varepsilon(r) s_k^{(4-r)},
\end{equation}
where
\begin{equation*}
 \tilde s_k^{(4-r)}=\ast^c s_{k_0}\otimes \ast^c s_{k_1}\otimes
\ast^c s_{k_2}\otimes \ast^c s_{k_3}
\end{equation*}
and $\ast^c x_{k_\mu}=\tilde e_{k_\mu}, \
\ast^c e_{k_\mu}=\tilde x_{k_\mu}.$
Here $\varepsilon(r)$ is the Levi-Civita symbol, i.e.
\begin{equation*}
\varepsilon(r)=\left\{\begin{array}{l}+1 \quad  \mbox{if} \quad ((r),
(4-r))\quad \mbox{is an even permutation of} \quad (0,1,2,3) \\
                            -1 \quad  \mbox{if} \quad ((r),
(4-r))\quad \mbox{is an odd permutation of} \quad (0,1,2,3).
                            \end{array}\right.
\end{equation*}
For example, for the 1- and 2-dimensional basis elements  we have
\begin{equation*}\label{}
\ast^c e_k^0=\tilde e_k^{123},  \qquad \ast^c e_k^1=-\tilde e_k^{023}, \qquad
\ast^c e_k^2=\tilde e_k^{013}, \qquad \ast^c e_k^3=-\tilde e_k^{012}
\end{equation*}
and
\begin{align*}
\ast^c e_k^{01}=\tilde e_k^{23},  \qquad \ast^c e_k^{02}=-\tilde e_k^{13}, \qquad  \ast^c e_k^{03}=\tilde e_k^{12}, \\
\ast^c e_k^{12}=\tilde e_k^{03}, \qquad \ast^c e_k^{13}=-\tilde e_k^{02}, \qquad \ast^c e_k^{23}=\tilde e_k^{01}.
\end{align*}
The operation \eqref{2.4} is linearly extended to chains.
\begin{prop}Let $a_r\in C(4)$ be an $r$-dimensional chain,
then we have
\begin{equation*}\label{11}
\ast^c\ast^c
 a_r=(-1)^{r}a_r.
\end{equation*}
\end{prop}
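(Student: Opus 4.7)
The plan is to reduce the identity to a statement about basis elements and then exploit linearity. Since $C(4)$ is free abelian with basis the $s_k^{(r)}$, and $\ast^c$ is defined to be extended linearly, every $r$-dimensional chain can be written as $a_r=\sum_k a^k s_k^{(r)}$, and it suffices to prove
\[
 \ast^c\ast^c s_k^{(r)} = (-1)^r s_k^{(r)}
\]
for every basis element. The remaining work is a clean tracking of two Levi-Civita signs.

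Concretely, I would write $\ast^c s_k^{(r)} = \varepsilon(I,J)\,\tilde s_k^{(4-r)}$, where $I\subset\{0,1,2,3\}$ is the tuple of positions occupied by the $e$-factors in $s_k^{(r)}$ and $J$ is the complementary tuple; by definition, $\varepsilon(I,J)$ is the sign of the permutation $(I,J)$ of $(0,1,2,3)$. Applying $\ast^c$ again, and using the second half of \eqref{2.4}, yields
\[
 \ast^c\ast^c s_k^{(r)} = \varepsilon(I,J)\,\varepsilon(J,I)\,s_k^{(r)}.
\]
Here the tuple $J$ now plays the role of the ``$e$-positions'' of $\tilde s_k^{(4-r)}$, with complement $I$.

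The core sign computation is then the observation that the permutation $(J,I)$ is obtained from $(I,J)$ by a block transposition moving the $r$-element block $I$ past the $(4-r)$-element block $J$. Such a block transposition has sign $(-1)^{r(4-r)}$, so
\[
 \varepsilon(I,J)\,\varepsilon(J,I) = (-1)^{r(4-r)}\,\varepsilon(I,J)^2 = (-1)^{r(4-r)}.
\]
To finish, I would note the elementary congruence $r(4-r)\equiv r \pmod{2}$ for $r\in\{0,1,2,3,4\}$ (since $r(4-r)-r=r(3-r)$ is even in each case), giving $(-1)^{r(4-r)}=(-1)^r$, which is the desired factor.

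I do not expect a conceptual obstacle; the only place where care is required is the block-swap sign count, which I would make rigorous either by a direct case check over $r=0,1,2,3,4$ using the explicit lists of $\ast^c$-images displayed before the proposition, or by the general statement that moving a block of length $r$ past a disjoint block of length $4-r$ costs exactly $r(4-r)$ adjacent transpositions. Either route gives the same $(-1)^r$, and linearity completes the proof for arbitrary $r$-chains.
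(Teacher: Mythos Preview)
Your proposal is correct and follows the same approach as the paper: reduce to basis elements by linearity and compute $\ast^c\ast^c$ directly on each $s_k^{(r)}$. The paper's proof is a single sentence (``applying the operation $\ast^c$ for basis elements''), so your block-transposition sign computation $\varepsilon(I,J)\varepsilon(J,I)=(-1)^{r(4-r)}=(-1)^r$ is simply a tidy, dimension-independent way of carrying out what the paper leaves as an implicit case check.
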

 \begin{proof} The proof consists in applying the operation $\ast^c$ for basis elements.
\end{proof}
Let us now consider a dual complex to $C(4)$. We define it as the complex of cochains
$K(4)$ with complex coefficients. The complex $K(4)$
has a similar structure, namely ${K(4)=K\otimes K\otimes K\otimes K}$, where $K$ is a dual
complex to the 1-dimensional complex $C$. Let $x^\kappa$ and $e^\kappa$, $\kappa\in {\mathbb Z}$, be  the 0- and 1-dimensional basis elements of $K$. Then an arbitrary $r$-dimensional basis element of $K(4)$ can be written  as
$s_{(r)}^k=s^{k_0}\otimes s^{k_1}\otimes s^{k_2}\otimes s^{k_3}$, where $s^{k_\mu}$
is either  $x^{k_\mu}$ or  $e^{k_\mu}$ and $k=(k_0,k_1,k_2,k_3)$.   We will call cochains forms,
emphasizing their relationship with differential forms.
Denote by  $K^r(4)$ the set of all $r$-forms. Then $K(4)$ can be expressed by
\begin{equation*}
K(4)=K^{ev}(4)\oplus K^{od}(4),
\end{equation*}
 where $K^{ev}(4)=K^0(4)\oplus K^2(4)\oplus K^4(4)$
  and $K^{od}(4)=K^1(4)\oplus K^3(4).$
  The complex $K(4)$ is a discrete analogue of $\Lambda(M)$.
  Let $\overset{r}{\omega}\in K^r(4)$, then we have
\begin{equation*}
\overset{0}{\omega}=\sum_k\overset{0}{\omega}_kx^k,  \quad  \overset{2}{\omega}=\sum_k\sum_{\mu<\nu} \omega_k^{\mu\nu}e_{\mu\nu}^k, \quad \overset{4}{\omega}=\sum_k\overset{4}{\omega}_ke^k,
\end{equation*}
\begin{equation*}
\overset{1}{\omega}=\sum_k\sum_{\mu=0}^3\omega_k^\mu e_\mu^k, \quad
\overset{3}{\omega}=\sum_k\sum_{\iota<\mu<\nu} \omega_k^{\iota\mu\nu}e_{\iota\mu\nu}^k,
\end{equation*}
where $e_\mu^k$, $e_{\mu\nu}^k$ and  $e_{\iota\mu\nu}^k$ are the 1-, 2- and 3-dimensional basis elements of $K(4)$, and $x^k=x^{k_0}\otimes x^{k_1}\otimes x^{k_2}\otimes x^{k_3}$, \  $e^k=e^{k_0}\otimes e^{k_1}\otimes e^{k_2}\otimes e^{k_3}$.
The components $\overset{0}{\omega}_k, \ \overset{4}{\omega}_k, \  \omega_k^\mu, \ \omega_k^{\mu\nu}$ and $\omega_k^{\iota\mu\nu}$ are complex numbers.

We define the pairing (chain-cochain) operation for any basis elements
$\epsilon_k\in C(4)$,  $s^k\in K(4)$ by the rule
\begin{equation}\label{2.5}
\langle\epsilon_k, \ s^k\rangle=\left\{\begin{array}{l}0, \quad \epsilon_k\ne s_k\\
                            1, \quad \epsilon_k=s_k.
                            \end{array}\right.
\end{equation}
The operation \eqref{2.5}  is linearly extended to arbitrary chains-cochains.

The coboundary operator $d^c: K^r(4)\rightarrow K^{r+1}(4)$ is defined by
\begin{equation}\label{2.6}
\langle\partial a_{r+1}, \ \overset{r}{\omega}\rangle=\langle a_{r+1}, \ d^c\overset{r}{\omega}\rangle,
\end{equation}
where $a_{r+1}\in C(4)$ is an $r+1$ dimensional chain. The operator $d^c$ is an analog of the exterior differential.
From the above it follows that
\begin{equation*}\label{}
 d^c\overset{4}{\omega}=0 \quad \mbox{and} \quad d^cd^c\overset{r}{\omega}=0 \quad \mbox{for any} \quad r.
\end{equation*}
Let the difference operator $\Delta_\mu$ be defined by
\begin{equation}\label{2.7}
\Delta_\mu\omega_k^{(r)}=\omega_{\tau_\mu k}^{(r)}-\omega_k^{(r)},
\end{equation}
where  $\omega_k^{(r)}\in\mathbb{C}$ is a component of $\overset{r}{\omega}\in K^r(4)$ and
$\tau_\mu$ is   the shift operator  which acts  as
\begin{equation*}\label{}
\tau_\mu k=(k_0,...k_\mu+1,...k_3), \quad   \mu=0,1,2,3,
  \end{equation*}
  where $\tau$ is defined by \eqref{2.1}.
  Using \eqref{2.3} and \eqref{2.6} we can calculate
\begin{equation}\label{2.8}
d^c\overset{0}{\omega}=\sum_k\sum_{\mu=0}^3(\Delta_\mu\overset{0}{\omega}_k)e_\mu^k, \qquad
d^c\overset{1}{\omega}=\sum_k\sum_{\mu<\nu}(\Delta_\mu\omega_k^\nu-\Delta_\nu\omega_k^\mu)e_{\mu\nu}^k,
\end{equation}
\begin{align}\label{2.9}
d^c\overset{2}{\omega}=\sum_k\big[(\Delta_0\omega_k^{12}-\Delta_1\omega_k^{02}+\Delta_2\omega_k^{01})e_{012}^k+\nonumber \\
+(\Delta_0\omega_k^{13}-\Delta_1\omega_k^{03}+\Delta_3\omega_k^{01})e_{013}^k+ \nonumber \\
+(\Delta_0\omega_k^{23}-\Delta_2\omega_k^{03}+\Delta_3\omega_k^{02})e_{023}^k+ \nonumber \\
+(\Delta_1\omega_k^{23}-\Delta_2\omega_k^{13}+\Delta_3\omega_k^{12})e_{123}^k\big],
\end{align}
\begin{equation}\label{2.10}
d^c\overset{3}{\omega}=\sum_k(\Delta_0\omega_k^{123}-\Delta_1\omega_k^{023}+\Delta_2\omega_k^{013}-\Delta_3\omega_k^{012})e^k.
\end{equation}

Let  $\tilde K(4)$ be a complex of the cochains over the double complex
$\tilde C(4)$, with the coboundary operator $d^c$ defined in it by
\eqref{2.6}. Hence, $\tilde K(4)$ has the same structure as  $K(4)$.
This means that the corresponding forms $\omega\in K(4)$
 and $\tilde\omega\in \tilde K(4)$ have  the same components.

Let us introduce a discrete version of the Hodge star operator by using the double complex construction.
Define the operation $\ast: K^r(4)\rightarrow \tilde K^{4-r}(4)$ for the basis element $s^k_{(r)}\in K^r(4)$ by the rule
\begin{equation}\label{2.11}
\ast s^k_{(r)}=Q(k_0)\varepsilon(r)\tilde s^k_{(4-r)},
\end{equation}
where
\begin{equation*}
Q(k_0)=\left\{\begin{array}{l}+1 \quad  \mbox{if} \quad s^{k_0}=x^{k_0} \\
                            -1 \quad  \mbox{if} \quad s^{k_0}= e^{k_0}.
                            \end{array}\right.
\end{equation*}
This definition makes sense because the formula \eqref{2.11} preserves the  Lorentz  signature of metric on $K(4)$.
More explicitly,  we have
\begin{equation*}\label{21}
\ast x^k=\tilde e^k, \qquad \ast e^k=-\tilde x^k,
\end{equation*}
\begin{equation*}\label{23}
\ast e_0^k=-\tilde e_{123}^k, \qquad \ast e_1^k=-\tilde e_{023}^k, \qquad
\ast e_2^k=\tilde e_{013}^k, \qquad \ast e_3^k=-\tilde e_{012}^k,
\end{equation*}
\begin{align*}\label{24}
\ast e_{01}^k&=-\tilde e_{23}^ k, \qquad \ast e_{02}^k=\tilde e_{13}^ k, \qquad \ast e_{03}^k=-\tilde e_{12}^ k, \nonumber \\
\ast e_{12}^k&=\tilde e_{03}^ k, \qquad  \ast e_{13}^k=-\tilde e_{02}^ k, \qquad \ast e_{23}^k=\tilde e_{01}^ k,
\end{align*}
\begin{equation*}\label{25}
\ast e_{012}^k=-\tilde e_3^ k, \quad \ast e_{013}^k=\tilde e_2^ k, \quad
\ast e_{023}^k=-\tilde e_1^ k, \quad \ast e_{123}^k=-\tilde e_0^k.
\end{equation*}
For any $r$-form the operation  \eqref{2.11} is extended by linearity. Likewise, the mapping
$\ast: \tilde K^r(4)\rightarrow  K^{4-r}(4)$ is given by the rule  \eqref{2.11}.
\begin{prop} Let $\overset{r}{\omega}\in K^r(4)$,
then we have
\begin{equation}\label{2.12}
\ast\ast\overset{r}{\omega}=
(-1)^{r+1}\overset{r}{\omega}.
 \end{equation}
\end{prop}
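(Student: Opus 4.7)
By linearity, it suffices to verify the identity on an arbitrary basis element $s^k_{(r)}\in K^r(4)$. The plan is to apply the defining formula \eqref{2.11} twice and keep careful track of two independent sign contributions: the factor $Q(k_0)$, which depends on whether the zeroth tensor factor is $x^{k_0}$ or $e^{k_0}$, and the Levi-Civita factor $\varepsilon(r)$.

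First I would write $\ast s^k_{(r)} = Q(k_0)\,\varepsilon(r)\,\tilde s^k_{(4-r)}$ and then apply the analogous rule on $\tilde K(4)$ to obtain
\begin{equation*}
\ast\ast s^k_{(r)} = Q(k_0)\,\varepsilon(r)\,\tilde Q(k_0)\,\varepsilon(4-r)\,s^k_{(r)},
\end{equation*}
where $\tilde Q(k_0)$ is the analogue of $Q(k_0)$ for $\tilde s^k_{(4-r)}$. The key observation for the $Q$-factor is that passing from $s^k_{(r)}$ to $\tilde s^k_{(4-r)}$ exchanges $x^{k_0}\leftrightarrow\tilde e^{k_0}$ and $e^{k_0}\leftrightarrow\tilde x^{k_0}$ in the zeroth slot. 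Hence in both cases the types are flipped, so $\tilde Q(k_0) = -Q(k_0)$ and consequently $Q(k_0)\tilde Q(k_0) = -1$.

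Next I would handle the Levi-Civita contribution. By definition $\varepsilon(r)$ is the sign of the permutation sending $(0,1,2,3)$ to $((r),(4-r))$, where $(r)$ lists the positions of the $e$-factors in $s^k_{(r)}$; similarly $\varepsilon(4-r)$ is the sign of the permutation to $((4-r),(r))$. These two orderings differ by a transposition of an $r$-block and a $(4-r)$-block, whose sign is $(-1)^{r(4-r)}$. Therefore $\varepsilon(r)\varepsilon(4-r) = (-1)^{r(4-r)}$, and since $r(4-r)\equiv r\pmod 2$, this equals $(-1)^r$.

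Combining the two pieces yields $\ast\ast s^k_{(r)} = (-1)\cdot(-1)^r\, s^k_{(r)} = (-1)^{r+1} s^k_{(r)}$, and extending by linearity gives \eqref{2.12}. The only delicate step is the sign identity $\varepsilon(r)\varepsilon(4-r) = (-1)^{r(4-r)}$; one could also simply check it case by case on the short list of basis elements displayed before the proposition, which is consistent with the author's remark that the verification reduces to an inspection of basis elements.
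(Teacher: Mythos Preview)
Your argument is correct and follows the same route as the paper: reduce by linearity to basis elements and read off the sign $(-1)^{r(4-r)+1}=(-1)^{r+1}$. The paper's proof is terse and simply asserts this sign; you have unpacked it by tracking separately the $Q$-contribution (which flips because the zeroth tensor slot swaps $x\leftrightarrow e$ under the complement) and the Levi-Civita contribution $\varepsilon(r)\varepsilon(4-r)=(-1)^{r(4-r)}$, which is a welcome clarification rather than a different method.
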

 \begin{proof} The operation  $\ast$ is linear. It is easy to check that by definition, the composition of $\ast$ with itself gives
 \begin{equation*}
 \ast\ast s^k_{(r)}=(-1)^{r(4-r)+1}s^k_{(r)}=(-1)^{r+1}s^k_{(r)}
 \end{equation*}
 for any basis element $s^k_{(r)}\in K^r(4)$.
\end{proof}
Hence, the operation $\ast^2$ is either an involution or antiinvolution. It means that the discrete $\ast$ operation imitates correctly the continuum case.
\begin{prop}
Let $\tilde a_r\in \tilde C(4)$ be an $r$-dimensional chain and let $\omega\in K^{4-r}(4)$.
Then we have
\begin{equation*}\label{27}
\langle \tilde a_r, \ \ast\omega\rangle=(-1)^rQ(k_0)\langle \ast^c \tilde a_r, \ \omega\rangle.
\end{equation*}
 \end{prop}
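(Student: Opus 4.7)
The plan is to reduce the statement to basis elements by linearity, evaluate both sides explicitly from the defining formulas \eqref{2.4} and \eqref{2.11}, and then check a small sign identity.

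First, since the operations $\ast$, $\ast^c$, and the pairing in \eqref{2.5} are all $\mathbb{C}$-linear, it suffices to prove the identity when $\tilde{a}_r = \tilde{s}_k^{(r)}$ is a single basis $r$-chain of $\tilde{C}(4)$ whose set of $e$-positions is $I\subset\{0,1,2,3\}$ with $|I|=r$, and $\omega = s^{k'}_{(4-r)}$ is a single basis $(4-r)$-form whose set of $e$-positions is $J'\subset\{0,1,2,3\}$ with $|J'|=4-r$.

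Next I would compute both pairings. By \eqref{2.11}, $\ast\omega = Q(k'_0)\,\varepsilon(4-r)\,\tilde{s}^{k'}_{(r)}$, where $\tilde{s}^{k'}_{(r)}$ has positions $I' = \{0,1,2,3\}\setminus J'$; using \eqref{2.5}, the pairing $\langle\tilde{s}_k^{(r)},\ast\omega\rangle$ is nonzero iff $k=k'$ and $I=I'$, in which case it equals $Q(k_0)\,\varepsilon(4-r)$. Similarly, by \eqref{2.4}, $\ast^c\tilde{s}_k^{(r)} = \varepsilon(r)\,s_k^{(4-r)}$ with positions $J = \{0,1,2,3\}\setminus I$, and $\langle\ast^c\tilde{a}_r,\omega\rangle$ is nonzero under the same conditions (since $J=J'$ is equivalent to $I=I'$), yielding $\varepsilon(r)$. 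Thus both sides vanish or are nonzero simultaneously, and in the nonzero case the identity reduces to $\varepsilon(4-r) = (-1)^r\varepsilon(r)$ for the pair of complementary index sets $I,J$.

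To finish, I observe that $\varepsilon(r)$ is the sign of the permutation $(I,J)$ of $(0,1,2,3)$, while $\varepsilon(4-r)$ is the sign of $(J,I)$. Moving the block $J$ of length $4-r$ past the block $I$ of length $r$ contributes a factor $(-1)^{r(4-r)}$, so $\varepsilon(4-r) = (-1)^{r(4-r)}\varepsilon(r)$. The remaining identity $(-1)^{r(4-r)} = (-1)^r$ follows because $r(4-r) - r = r(3-r)$ is even for every integer $r\in\{0,1,2,3,4\}$, which closes the argument.

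The proof is essentially a sign calculation, and the main obstacle is the combinatorial bookkeeping: one has to see that the Levi-Civita factor $\varepsilon(4-r)$ produced by $\ast$ and the Levi-Civita factor $\varepsilon(r)$ produced by $\ast^c$ differ, after the position sets have been identified as complements, by exactly $(-1)^r$ — the parity identity $r(4-r)\equiv r\pmod 2$ is doing all the work.
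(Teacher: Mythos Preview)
Your proof is correct. The paper itself does not give an argument for this proposition; it simply refers the reader to Proposition~3 of \cite{S2}. Your self-contained reduction to basis elements, followed by the permutation-sign identity $\varepsilon(4-r)=(-1)^{r(4-r)}\varepsilon(r)$ together with the parity observation $r(4-r)\equiv r\pmod 2$, is exactly the computation one expects the cited proof to carry out, so there is no essential difference in approach.

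One small remark on presentation: the symbol $Q(k_0)$ in the statement is tied not to the index $k_0$ alone but to whether the $0$-th tensor factor of the relevant basis element is of $x$-type or $e$-type. You implicitly use that, in the only nonvanishing case $k=k'$ and $J=J'$, the $Q(k_0)$ produced by $\ast\omega$ agrees with the $Q(k_0)$ appearing on the right-hand side (both are determined by whether $0\in J'$). It would strengthen the write-up to say this explicitly, since otherwise a reader might worry that the $Q$-factor attached to the chain $\tilde a_r$ (whose $0$-th factor has the complementary type) is intended instead.
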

\begin{proof}
See Proposition~3 in \cite{S2}.
\end{proof}
Let us consider the  $4$-dimensional finite chain  $e_n\subset C(4)$ of the form:
\begin{equation}\label{2.13}
e_n=\sum_ke_k, \quad k=(k_0,k_1,k_2,k_3),\quad k_\mu=1,2, ...,n_\mu,
\end{equation}
where $n_\mu\in \mathbb{N}$ is a fixed number for each $\mu=0,1,2,3$.
This finite sum of $4$-dimensional basis elements of  $C(4)$ imitates a domain of $M$.
We set
\begin{equation*}\label{}
 V_r=\sum_k\sum_{(r)}s_k^{(r)}\otimes\ast^c s_k^{(r)}, \quad k_\mu=1,2, ...,n_\mu.
\end{equation*}
For example,
\begin{equation*}
 V_1=\sum_k\sum_{\mu=0}^3 e_k^{\mu}\otimes\ast^c e_k^{\mu}=\sum_k\big(e_k^0\otimes\tilde e_k^{123}-
 e_k^1\otimes\tilde e_k^{023}+e_k^2\otimes\tilde e_k^{013}-e_k^3\otimes\tilde
 e_k^{012}\big).
 \end{equation*}
  Let
 \begin{equation*}
 \mathbb{V}=\sum_{r=0}^4 V_r.
 \end{equation*}
For any $r$-forms $\varphi, \omega\in K^r(4)$  the inner
 product over the set  $e_n$ \eqref{2.13} is defined  by the rule
 \begin{equation}\label{2.14}
 (\varphi ,\ \omega)_{e_n}=\langle\mathbb{V}, \ \varphi\otimes\ast\overline{\omega}\rangle,
 \end{equation}
  where $\overline{\omega}$ denotes the complex conjugate of the form $\omega$. For  forms of different degrees the product \eqref{2.14} is set equal to zero.
  It is clear that
  \begin{equation*}
   (\varphi ,\ \omega)_{e_n}=\langle V_r, \ \varphi\otimes\ast\overline{\omega}\rangle
 =\sum_k\sum_{(r)}
 \langle s_k^{(r)},\
\varphi \rangle\langle\ast^c s_k^{(r)}, \ \ast\overline{\omega}\rangle.
 \end{equation*}
 For example, if $\overset{1}{\varphi}, \ \overset{1}{\omega}\in K^1(4)$ then we obtain
\begin{equation*}\label{}
(\overset{1}{\varphi}, \ \overset{1}{\omega})_{e_n}=\sum_k\big(-\varphi_k^0\overline{\omega}_k^0+\varphi_k^1\overline{\omega}_k^1+\varphi_k^2\overline{\omega}_k^2+
\varphi_k^3\overline{\omega}_k^3\big).
\end{equation*}
It should be noted that the definition of the inner product correctly imitates the continual case \eqref{1.1} and the  Lorentz  metric structure is still  captured here.

The inner product \eqref{2.14} makes it possible to define the adjoint of
$d^c$, denoted $\delta^c$.
\begin{prop} For any $(r-1)$-form $\varphi$ and $r$-form $\omega$ we have
\begin{equation}\label{2.15}
(d^c\varphi, \ \omega)_{e_n}=\langle\partial\mathbb{V}, \ \varphi\otimes\ast\overline{\omega}\rangle+( \varphi, \ \delta^c\omega)_{e_n},
\end{equation} where
 \begin{equation}\label{2.16}
 \delta^c=(-1)^{r}\ast^{-1}d^c\ast
\end{equation}
and $\ast^{-1}$ is the inverse operation of $\ast$.
\end{prop}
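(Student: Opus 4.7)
The plan is to compute $\langle\partial\mathbb{V},\varphi\otimes\ast\overline{\omega}\rangle$ directly using the graded Leibniz rule \eqref{2.3}, identify the two surviving contributions by a dimension count, and recognise them as $(d^c\varphi,\omega)_{e_n}$ and $-(\varphi,\delta^c\omega)_{e_n}$. The only tools needed are the chain--cochain adjointness \eqref{2.6}, the Leibniz rule for $\partial$, and the definition \eqref{2.16} of $\delta^c$.

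First I would unfold the left-hand side of \eqref{2.15}. Since $d^c\varphi$ is an $r$-form, only $V_r$ contributes to $\langle\mathbb{V},d^c\varphi\otimes\ast\overline{\omega}\rangle$, and applying \eqref{2.6} basis-element by basis-element yields
\[
(d^c\varphi,\omega)_{e_n}=\sum_k\sum_{(r)}\langle\partial s_k^{(r)},\varphi\rangle\,\langle\ast^c s_k^{(r)},\ast\overline{\omega}\rangle.
\]
Next I would extend $\partial$ from $C(4)$ to $C(4)\otimes\tilde C(4)$ by the same rule \eqref{2.3}, and expand
\[
\partial V_s=\sum_k\sum_{(s)}\bigl(\partial s_k^{(s)}\otimes\ast^c s_k^{(s)}+(-1)^{s}\,s_k^{(s)}\otimes\partial\ast^c s_k^{(s)}\bigr).
\]
Pairing $\partial\mathbb{V}=\sum_s\partial V_s$ against $\varphi\otimes\ast\overline{\omega}$ forces the bidegree of the surviving chain components to be $(r-1,4-r)$, and a short inspection shows this selects only two terms: the first summand of $\partial V_r$ and the second summand of $\partial V_{r-1}$.

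The first of these is exactly the expression above for $(d^c\varphi,\omega)_{e_n}$. The second equals
\[
(-1)^{r-1}\sum_k\sum_{(r-1)}\langle s_k^{(r-1)},\varphi\rangle\,\langle\partial\ast^c s_k^{(r-1)},\ast\overline{\omega}\rangle.
\]
To match this with $(\varphi,\delta^c\omega)_{e_n}=\sum_k\sum_{(r-1)}\langle s_k^{(r-1)},\varphi\rangle\,\langle\ast^c s_k^{(r-1)},\ast\overline{\delta^c\omega}\rangle$, I would apply chain--cochain adjointness on $\tilde K(4)$ to replace $\langle\partial\ast^c s_k^{(r-1)},\ast\overline{\omega}\rangle$ by $\langle\ast^c s_k^{(r-1)},d^c\ast\overline{\omega}\rangle$. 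Since $d^c$, $\ast$ and $\ast^{-1}$ all act on coefficients by real matrices in the standard basis, they commute with complex conjugation, so $\ast\overline{\delta^c\omega}=(-1)^{r}d^c\ast\overline{\omega}$ by \eqref{2.16}. The sign $(-1)^{r-1}\cdot(-1)^{r}=-1$ then identifies the second contribution with $-(\varphi,\delta^c\omega)_{e_n}$, and rearranging gives \eqref{2.15}.

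The main obstacle is sign and dimension bookkeeping: the $(-1)^{s}$ from the graded Leibniz rule, the $(-1)^{r}$ from the definition of $\delta^c$, and the $(-1)^{r+1}$ from Proposition~2.2 (implicit in inverting $\ast$) must all fit together consistently. Equally crucial is the dimension-selection argument which collapses the many terms in $\partial\mathbb{V}$ down to only two non-vanishing pairings; this is the one step that genuinely needs to be spelled out rather than waved through.
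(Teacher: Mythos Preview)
Your argument is correct. The paper itself does not spell out a proof here; it simply states ``The proof is a computation'' and refers to Proposition~4 in \cite{S2}, so there is no in-paper derivation to compare against line by line. Your computation---extending $\partial$ to $C(4)\otimes\tilde C(4)$ via the graded Leibniz rule, using the bidegree count to isolate the $s=r$ and $s=r-1$ contributions from $\partial\mathbb{V}$, and then invoking chain--cochain duality on $\tilde K(4)$---is exactly the natural way to organise this calculation, and the sign tracking $(-1)^{r-1}\cdot(-1)^{r}=-1$ is right. This is almost certainly the same argument as in \cite{S2}; in any case it establishes \eqref{2.15} with $\delta^c$ given by \eqref{2.16}.
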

\begin{proof}
The proof is a computation. See Proposition~4 in \cite{S2}.
\end{proof}
The relation \eqref{2.15} is a discrete analog of the Green formula.
From \eqref{2.12} we infer
\begin{equation*}
\ast^{-1}=(-1)^{r(4-r)+1}\ast=(-1)^{r+1}\ast.
\end{equation*}
Putting this in \eqref{2.16} we obtain
 \begin{equation}\label{2.17}
 \delta^c=\ast d^c\ast.
\end{equation}
This makes it clear that the operator $\delta^c: K^{r+1}(4) \rightarrow K^r(4)$ is a discrete analog of the codifferential $\delta$.
 From \eqref{2.17} it follows that
 \begin{equation*}\label{}
 \delta^c\overset{0}{\omega}=0 \quad \mbox{and} \quad \delta^c\delta^c\overset{r}{\omega}=0 \quad \mbox{for any} \quad r.
\end{equation*}
  Using the definitions of $d^c$  and $\ast$ we can calculate
\begin{equation}\label{2.18}
\delta^c\overset{1}{\omega}=\sum_k(\Delta_0\omega_k^{0}-\Delta_1\omega_k^{1}-\Delta_2\omega_k^{2}-\Delta_3\omega_k^{3})x^k,
\end{equation}
\begin{align}\label{2.19}
\delta^c\overset{2}{\omega}=\sum_k\big[(\Delta_1\omega_k^{01}+\Delta_2\omega_k^{02}+\Delta_3\omega_k^{03})e_{0}^k \nonumber \\
+(\Delta_0\omega_k^{01}+\Delta_2\omega_k^{12}+\Delta_3\omega_k^{13})e_{1}^k \nonumber \\
+(\Delta_0\omega_k^{02}-\Delta_1\omega_k^{12}+\Delta_3\omega_k^{23})e_{2}^k \nonumber \\
+(\Delta_0\omega_k^{03}-\Delta_1\omega_k^{13}-\Delta_2\omega_k^{23})e_{3}^k\big],
\end{align}
\begin{align}\label{2.20}
\delta^c\overset{3}{\omega}=\sum_k\big[(-\Delta_2\omega_k^{012}-\Delta_3\omega_k^{013})e_{01}^k+
(\Delta_1\omega_k^{012}-\Delta_3\omega_k^{023})e_{02}^k \nonumber \\
+(\Delta_1\omega_k^{013}+\Delta_2\omega_k^{023})e_{03}^k
+(\Delta_0\omega_k^{012}-\Delta_3\omega_k^{123})e_{12}^k \nonumber \\
+(\Delta_0\omega_k^{013}+\Delta_2\omega_k^{123})e_{13}^k
+(\Delta_0\omega_k^{023}-\Delta_1\omega_k^{123})e_{23}^k\big],
\end{align}
\begin{align}\label{2.21}
\delta^c\overset{4}{\omega}=\sum_k\big[(\Delta_3\omega_k^{4})e_{012}^k-(\Delta_2\omega_k^{4})e_{013}^k
+(\Delta_1\omega_k^{4})e_{023}^k+(\Delta_0\omega_k^{4})e_{123}^k\big].
\end{align}
The linear map
\begin{equation*}
\Delta^c=-(d^c\delta^c+\delta^cd^c): \ K^r(4) \rightarrow K^r(4)
\end{equation*}
is called a discrete analogue of the Laplacian. It is clear that
\begin{equation*}\label{37}
-(d^c\delta^c+\delta^cd^c)=(d^c-\delta^c)^2=-(d^c+\delta^c)^2.
\end{equation*}
\section{Discrete Dirac-K\"{a}hler and Hestenes Equations}
Let us introduce a discrete inhomogeneous form as follows
\begin{equation*}\label{40}
\Omega=\sum_{r=0}^4\overset{r}{\omega},
\end{equation*}
where $\overset{r}{\omega}\in K^r(4)$.
A discrete analog of the Dirac-K\"{a}hler equation \eqref{1.2} is defined to be
\begin{equation}
i(d^c+\delta^c)\Omega=m\Omega,
\end{equation}
where $i$ is the usual complex unit and $m$ is a positive number (mass parameter).
We can write this equation more explicitly by separating its homogeneous components as
\begin{align*}\label{}
i\delta^c\overset{1}{\omega}=m\overset{0}{\omega},\\ \nonumber
i(d^c\overset{0}{\omega}+\delta^c\overset{2}{\omega})=m\overset{1}{\omega},\\
i(d^c\overset{1}{\omega}+\delta^c\overset{3}{\omega})=m\overset{2}{\omega},\\ \nonumber
i(d^c\overset{2}{\omega}+\delta^c\overset{4}{\omega})=m\overset{3}{\omega},\\ \nonumber
id^c\overset{3}{\omega}=m\overset{4}{\omega}.
\end{align*}
Moreover, by \eqref{2.8}--\eqref{2.10}  and \eqref{2.18}--\eqref{2.21}, the system of equations above for each $k=(k_0, k_1, k_2, k_3)$ expresses in  terms of 16 difference equations as follows

\begin{align*}\label{}
i(\Delta_0\omega_k^{0}-\Delta_1\omega_k^{1}-\Delta_2\omega_k^{2}-\Delta_3\omega_k^{3})=m\overset{0}{\omega}_k,\\
i(\Delta_0\overset{0}{\omega}_k+\Delta_1\omega_k^{01}+\Delta_2\omega_k^{02}+\Delta_3\omega_k^{03})=m\omega_k^0,\\
i(\Delta_1\overset{0}{\omega}_k+\Delta_0\omega_k^{01}+\Delta_2\omega_k^{12}+\Delta_3\omega_k^{13})=m\omega_k^1,\\
i(\Delta_2\overset{0}{\omega}_k+\Delta_0\omega_k^{02}-\Delta_1\omega_k^{12}+\Delta_3\omega_k^{23})=m\omega_k^2,\\
i(\Delta_3\overset{0}{\omega}_k+\Delta_0\omega_k^{03}-\Delta_1\omega_k^{13}-\Delta_2\omega_k^{23})=m\omega_k^3,\\
i(\Delta_0\omega_k^1-\Delta_1\omega_k^0-\Delta_2\omega_k^{012}-\Delta_3\omega_k^{013})=m\omega_k^{01},\\
i(\Delta_0\omega_k^2-\Delta_2\omega_k^0+\Delta_1\omega_k^{012}-\Delta_3\omega_k^{023})=m\omega_k^{02},\\
i(\Delta_0\omega_k^3-\Delta_3\omega_k^0+\Delta_1\omega_k^{013}+\Delta_2\omega_k^{023})=m\omega_k^{03},\\
i(\Delta_1\omega_k^2-\Delta_2\omega_k^1+\Delta_0\omega_k^{012}-\Delta_3\omega_k^{123})=m\omega_k^{12},\\
i(\Delta_1\omega_k^3-\Delta_3\omega_k^1+\Delta_0\omega_k^{013}+\Delta_2\omega_k^{123})=m\omega_k^{13},\\
i(\Delta_2\omega_k^3-\Delta_3\omega_k^2+\Delta_0\omega_k^{023}-\Delta_1\omega_k^{123})=m\omega_k^{23}, \\
i(\Delta_0\omega_k^{12}-\Delta_1\omega_k^{02}+\Delta_2\omega_k^{01}+\Delta_3\omega_k^{4})=m\omega_k^{012},\\
i(\Delta_0\omega_k^{13}-\Delta_1\omega_k^{03}+\Delta_3\omega_k^{01}-\Delta_2\omega_k^{4})=m\omega_k^{013},\\
i(\Delta_0\omega_k^{23}-\Delta_2\omega_k^{03}+\Delta_3\omega_k^{02}+\Delta_1\omega_k^{4})=m\omega_k^{023},\\
i(\Delta_1\omega_k^{23}-\Delta_2\omega_k^{13}+\Delta_3\omega_k^{12}+\Delta_0\omega_k^{4})=m\omega_k^{123}, \\
i(\Delta_0\omega_k^{123}-\Delta_1\omega_k^{023}+\Delta_2\omega_k^{013}-\Delta_3\omega_k^{012})=m\overset{4}{\omega}_k.
\end{align*}

Let us define  the Clifford multiplication in  $K(4)$ by the following rules:
\begin{align*}
&\mbox{(a)} \quad x^kx^k=x^k, \quad x^ke^k_\mu=e^k_\mu x^k=e^k_\mu,\\
&\mbox{(b)} \quad e^k_\mu e^k_\nu+e^k_\nu e^k_\mu=2g_{\mu\nu}x^k, \quad g_{\mu\nu}=\mbox{diag}(1,-1,-1,-1),\\
&\mbox{(c)} \quad e^k_{\mu_1}\cdots e^k_{\mu_s}=e^k_{\mu_1\cdots \mu_s} \quad \mbox{for} \quad 0\leq \mu_1<\cdots <\mu_s\leq 3,
\end{align*}
supposing the product to be zero in all other cases.

The operation is linearly extended to arbitrary discrete forms. For example, for   $\overset{1}{\omega},  \overset{1}{\varphi}\in K^1(4)$ we have
\begin{align*}
\overset{1}{\omega}\overset{1}{\varphi}=\big(\sum_k\sum_{\mu=0}^3\omega^\mu_ke_\mu^k\big)\big(\sum_k\sum_{\mu=0}^3\varphi^\mu_ke_\mu^k\big)=
\sum_k (\omega^0_k\varphi^0_k-\omega^1_k\varphi^1_k-\omega^2_k\varphi^2_k-\omega^3_k\varphi^3_k)x^k \\
+\sum_k [(\omega^0_k\varphi^1_k-\omega^1_k\varphi^0_k)e_{01}^k+(\omega^0_k\varphi^2_k-\omega^2_k\varphi^0_k)e_{02}^k+
(\omega^0_k\varphi^3_k-\omega^3_k\varphi^0_k)e_{03}^k\\
+(\omega^1_k\varphi^2_k-\omega^2_k\varphi^1_k)e_{12}^k+(\omega^1_k\varphi^3_k-\omega^3_k\varphi^1_k)e_{13}^k+(\omega^2_k\varphi^3_k-\omega^3_k\varphi^2_k)e_{23}^k].
\end{align*}
Consider the following unit forms
\begin{equation}\label{3.2}
x=\sum_kx^k, \quad e=\sum_ke^k, \quad e_\mu=\sum_ke_\mu^k, \quad e_{\mu\nu}=\sum_ke_{\mu\nu}^k.
\end{equation}
Note that the unit 0-form $x$ plays  a role of the unit element in $K(4)$, i.e. for any $r$-form  $\overset{r}{\omega}$ we have
\begin{equation*}
x\overset{r}{\omega}=\overset{r}{\omega}x=\overset{r}{\omega}.
\end{equation*}
\begin{prop}
The following holds:
\begin{equation*}
e_\mu e_\nu+e_\nu e_\mu=2g_{\mu\nu}x, \qquad \mu,\nu=0,1,2,3.
\end{equation*}
\end{prop}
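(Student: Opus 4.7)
The plan is to reduce the identity for the unit forms $e_\mu = \sum_k e_\mu^k$ to the pointwise Clifford rule (b), which already gives $e_\mu^k e_\nu^k + e_\nu^k e_\mu^k = 2g_{\mu\nu}x^k$ at each vertex $k$. Since the Clifford product has been extended to arbitrary discrete forms linearly (as illustrated for $\overset{1}{\omega}\overset{1}{\varphi}$), the product of two unit 1-forms is a double sum, and the core observation is that the only surviving contributions come from the diagonal $k = k'$.

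First I would write
\begin{equation*}
e_\mu e_\nu = \Bigl(\sum_k e_\mu^k\Bigr)\Bigl(\sum_{k'} e_\nu^{k'}\Bigr) = \sum_{k,k'} e_\mu^k\, e_\nu^{k'}.
\end{equation*}
The decisive step is to note that the Clifford multiplication was defined only through rules (a), (b), (c) for basis elements sharing the same superscript $k$, with the product declared to be zero in all other cases. Hence $e_\mu^k e_\nu^{k'} = 0$ whenever $k \neq k'$, and the double sum collapses to $e_\mu e_\nu = \sum_k e_\mu^k e_\nu^k$. The same holds for $e_\nu e_\mu$.

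Adding the two expressions and invoking rule (b) termwise yields
\begin{equation*}
e_\mu e_\nu + e_\nu e_\mu = \sum_k \bigl(e_\mu^k e_\nu^k + e_\nu^k e_\mu^k\bigr) = \sum_k 2g_{\mu\nu}\, x^k = 2g_{\mu\nu}\, x,
\end{equation*}
using the definition $x = \sum_k x^k$ from \eqref{3.2}.

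The argument is essentially mechanical; the only point that requires care is the vanishing of the off-diagonal terms $e_\mu^k e_\nu^{k'}$ for $k \neq k'$, which is not spelled out by rules (a)--(c) themselves but is secured by the blanket clause that the product is zero in all cases not explicitly covered. Once this is observed, the claim follows by linearity and a pointwise application of the local Clifford relation.
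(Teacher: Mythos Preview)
Your proof is correct and follows essentially the same approach as the paper, which simply states ``By the rule (b) it is obvious.'' Your version spells out explicitly the collapse of the double sum via the ``zero in all other cases'' clause and the termwise application of rule (b), which is precisely the content behind the paper's one-line remark.
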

\begin{proof}
By the rule (b) it is obvious.
\end{proof}
\begin{prop}  For any inhomogeneous form $\Omega\in K(4)$ we have
\begin{equation}\label{3.3}
(d^c+\delta^c)\Omega=\sum_{\mu=0}^3e_\mu\Delta_\mu\Omega,
\end{equation}
where
 $\Delta_\mu$ is the difference operator which acts on each component of $\Omega$ by the rule \eqref{2.7}.
\end{prop}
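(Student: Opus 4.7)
The plan is to verify \eqref{3.3} by reducing to each homogeneous component. Since both sides are linear in $\Omega$, it suffices to check the identity on a single component $\overset{r}{\omega}\in K^r(4)$ for each $r=0,1,2,3,4$. For fixed $r$, I would expand the right-hand side by computing the Clifford products $e^k_\mu\cdot e^k_{\mu_1\cdots\mu_r}$ term by term using rules (a)--(c). The basic observation is that this product splits naturally into two parts according to whether $\mu$ appears in the multi-index $(\mu_1,\ldots,\mu_r)$: when $\mu\notin\{\mu_1,\ldots,\mu_r\}$, one reorders to obtain $\pm e^k_{\nu_1\cdots\nu_{r+1}}$, producing an $(r+1)$-form; when $\mu=\mu_j$ for some $j$, repeated use of the anticommutation relation in (b) together with (c) collapses the product to $g_{\mu\mu}$ times an $(r-1)$-dimensional basis element. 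Also note that $e_\mu \cdot e^k_\alpha=e^k_\mu e^k_\alpha$ because products of basis elements at different lattice points vanish.

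Thus $\sum_\mu e_\mu\Delta_\mu\overset{r}{\omega}$ decomposes canonically into a degree-$(r+1)$ part and a degree-$(r-1)$ part. The next step is to identify the degree-raising part with $d^c\overset{r}{\omega}$ from \eqref{2.8}--\eqref{2.10} and the degree-lowering part with $\delta^c\overset{r}{\omega}$ from \eqref{2.18}--\eqref{2.21}. Since $K(4)=K^{ev}(4)\oplus K^{od}(4)$ and the operators $d^c$, $\delta^c$ send $K^r(4)$ to $K^{r\pm 1}(4)$, the separation by degree lets me match the two pieces independently. The alternating-sign pattern on the raising side comes from the permutation parity needed to reorder the strictly increasing multi-index, which matches the $\pm$ pattern in \eqref{2.8}--\eqref{2.10}; the alternating pattern on the lowering side carries an additional factor of $g_{\mu\mu}$, producing the Lorentz-signature sign structure (plus for $\mu=0$, minus for $\mu=1,2,3$) visible in \eqref{2.18}--\eqref{2.21}.

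The cases $r=0$ and $r=4$ are immediate: for $r=0$ only the raising part survives and the identity reduces to \eqref{2.8}, while for $r=4$ only the lowering part survives and one reads off \eqref{2.21} directly after computing the signs $g_{\mu\mu}$ times the permutation sign. The intermediate cases $r=1,2,3$ are the substantive ones and require both pieces; each is a finite, mechanical expansion.

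The main obstacle is sign bookkeeping, coming from three independent sources: the metric factor $g_{\mu\mu}$ in rule (b), the transposition sign incurred when moving $e^k_\mu$ through $e^k_{\mu_1},\ldots,e^k_{\mu_{j-1}}$ to apply (b), and the alternating sign convention already built into \eqref{2.8}--\eqref{2.10} and \eqref{2.18}--\eqref{2.21}. These three must be tallied coherently on a case-by-case basis; once this is done, the matching is straightforward term-by-term verification and no deeper structural argument is required.
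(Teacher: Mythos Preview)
Your plan is correct: the identity is linear, so reducing to homogeneous basis forms and splitting the Clifford product $e^k_\mu e^k_{\mu_1\cdots\mu_r}$ into its degree-raising (wedge) and degree-lowering (contraction) parts, then matching against the explicit formulas \eqref{2.8}--\eqref{2.10} and \eqref{2.18}--\eqref{2.21}, is exactly the right route, and your sign analysis (permutation parity for the raising part, permutation parity times $g_{\mu\mu}$ for the lowering part) is accurate.

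As for comparison: the present paper does not supply an independent argument---its proof consists solely of the citation ``See Proposition~1 in \cite{S4}''. Your outline therefore goes beyond what the paper itself provides; the direct term-by-term verification you describe is the natural (and essentially only) way to establish the identity, and is presumably what the cited reference does.
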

\begin{proof}
See Proposition~1 in \cite{S4}.
\end{proof}
Clearly,  the discrete Dirac-K\"{a}hler equation can be rewritten in the form
\begin{equation*}
i\sum_{\mu=0}^3e_\mu\Delta_\mu\Omega=m\Omega.
\end{equation*}
Let $\Omega^{ev}\in K^{ev}(4)$ be a real-valued even inhomogeneous form, i.e. $\Omega^{ev}=\overset{0}{\omega}+\overset{2}{\omega}+\overset{4}{\omega}$.
A discrete analogue of the Hestenes equation \eqref{1.5} is defined by
\begin{equation}\label{3.4}
-(d^c+\delta^c)\Omega^{ev} e_1e_2=m\Omega^{ev} e_0,
\end{equation}
where $e_0, e_1, e_2$ are given by  \eqref{3.2}. From \eqref{3.3} it follows that Eq.~\eqref{3.4} is equivalent to
\begin{equation*}
-\sum_{\mu=0}^3\big(e_\mu\Delta_\mu\Omega^{ev}\big)e_1e_2=m\Omega^{ev} e_0.
\end{equation*}
This equation  can be expressed in terms of difference equations as
\begin{eqnarray*}\label{}
\Delta_0\omega_k^{12}-\Delta_1\omega_k^{02}+\Delta_2\omega_k^{01}+\Delta_3\overset{4}{\omega}_k=m\overset{0}{\omega}_k,\\
\Delta_2\overset{0}{\omega}_k+\Delta_0\omega_k^{02}-\Delta_1\omega_k^{12}+\Delta_3\omega_k^{23}=m\omega_k^{01},\\
-\Delta_1\overset{0}{\omega}_k-\Delta_0\omega_k^{01}-\Delta_2\omega_k^{12}-\Delta_3\omega_k^{13}=m\omega_k^{02},\\
-\Delta_1\omega_k^{23}+\Delta_2\omega_k^{13}-\Delta_3\omega_k^{12}-\Delta_0\overset{4}{\omega}_k=m\omega_k^{03},\\
-\Delta_0\overset{0}{\omega}_k-\Delta_1\omega_k^{01}-\Delta_2\omega_k^{02}-\Delta_3\omega_k^{03}=m\omega_k^{12},\\
-\Delta_0\omega_k^{23}+\Delta_2\omega_k^{03}-\Delta_3\omega_k^{02}-\Delta_1\overset{4}{\omega}_k=m\omega_k^{13},\\
\Delta_0\omega_k^{13}-\Delta_1\omega_k^{03}+\Delta_3\omega_k^{01}-\Delta_2\overset{4}{\omega}_k=m\omega_k^{23},\\
\Delta_3\overset{0}{\omega}_k+\Delta_0\omega_k^{03}-\Delta_1\omega_k^{13}-\Delta_2\omega_k^{23}=m\overset{4}{\omega}_k.
\end{eqnarray*}

In \cite[Proposition 5]{S4},  it is proven  that by a solution of the discrete Dirac-K\"{a}hler equation \eqref{3.3} four independent real-valued solutions of the discrete Hestenes equation \eqref{3.4}
are constructed. This is a discrete version of the well-known result for
corresponding continuum equations \cite{B1}.

\section{Plane Wave Solutions}
Let us consider the real-valued forms
\begin{equation}\label{4.1}
 \Psi^\pm=\sum_k\Psi^\pm_kx^k,
 \end{equation}
 where
 \begin{equation}\label{4.2}
  \Psi^\pm_k=(x\pm p_0e_{12})^{k_0}(x\pm p_1e_{12})^{k_1}(x\pm p_2e_{12})^{k_2}(x\pm p_3e_{12})^{k_3},
 \end{equation}
 and $p_\mu\in\mathbb{R}$.
 Recall that $e_{12}$ is the unit 2-form given by \eqref{3.2}.
 It is easy to check that
\begin{equation}\label{4.3}
  \Delta_\mu\Psi_k^\pm=\pm p_\mu\Psi_k^\pm e_{12}, \quad \mu=0,1,2,3.
   \end{equation}
   Therefore
\begin{equation}\label{4.4}
   (d^c+\delta^c)\Psi^\pm=\sum_{\mu=0}^3e_\mu\Delta_\mu \Psi^\pm=\pm\sum_{\mu=0}^3e_\mu p_\mu \Psi^\pm e_{12}.
 \end{equation}
It should be noted  that the components $\Psi^\pm_k$ can be represented as
 \begin{equation*}
  \Psi^\pm_k=\sum_k(\psi^{\pm}_kx+\phi^{\pm}_ke_{12}),
 \end{equation*}
 where
 $$\psi^{\pm}_k=\psi^{\pm}_k(p_0,p_1,p_2,p_3) \quad \mbox{and}  \quad  \phi^{\pm}_k=\phi^{\pm}_k(p_0,p_1,p_2,p_3).$$
 Hence $\Psi^\pm$ are inhomogeneous real-valued  even forms of the form
 \begin{equation*}
 \Psi^\pm=\psi^{\pm}+\phi^{\pm},
 \end{equation*}
  where
  \begin{equation*}
  \psi^\pm=\sum_k\psi^{\pm}_kx^k, \qquad \phi^{\pm}=\sum_k\phi^{\pm}_ke_{12}^k.
  \end{equation*}
  We wish to find a solution of the discrete Hestenes equation \eqref{3.4} of the form
\begin{equation}\label{4.5}
 \Omega^\pm=A\Psi^\pm,
 \end{equation}
 where
   $A\in K^{ev}(4)$ is a constant real-valued form.
Hence $A$ can be expanded as
 \begin{equation}\label{4.6}
  A=\alpha^0x+\sum_{\mu<\nu}\alpha^{\mu\nu}e_{\mu\nu}+\alpha^4e,
 \end{equation}
   where $\alpha^0, \alpha^{\mu\nu}, \alpha^4\in\mathbb{R}$  and $x$, $e_{\mu\nu}$, $e$ are the unit forms  given by  \eqref{3.2}.
 We have
 \begin{eqnarray*}
(d^c+\delta^c)\Omega^\pm=(d^c+\delta^c)A\Psi^\pm=\sum_{\mu=0}^3e_\mu\Delta_\mu (A\Psi^\pm)\\
=\sum_{\mu=0}^3e_\mu A(\Delta_\mu\Psi^\pm)=
\sum_{\mu=0}^3e_\mu A\sum_k(\Delta_\mu\Psi^\pm_k) x^k\\
=\pm\sum_{\mu=0}^3e_\mu p_\mu A\sum_k\Psi^\pm_k x^k e_{12}=\pm\sum_{\mu=0}^3e_\mu p_\mu A\Psi^\pm e_{12}.
\end{eqnarray*}
Substituting this into the discrete Hestenes equation \eqref{3.4} we obtain
 \begin{equation*}
  \mp\big(\sum_{\mu=0}^3e_\mu p_\mu A\Psi^\pm e_{12}\big)e_{12}=mA\Psi^\pm e_0.
 \end{equation*}
By definition, $e_{12}e_{12}=-x$. This yields
 \begin{equation*}
  \pm\sum_{\mu=0}^3e_\mu p_\mu A\Psi^\pm=mA\Psi^\pm e_0.
 \end{equation*}
 Since $\Psi^\pm e_0 = e_0\Psi^\pm$, the equation above reduces to
 \begin{equation}\label{4.7}
  \sum_{\mu=0}^3e_\mu p_\mu A=mAe_0
 \end{equation}
  in the case of the form $\Psi^+$ and to
\begin{equation}\label{4.8}
  -\sum_{\mu=0}^3e_\mu p_\mu A=mAe_0
 \end{equation}
 in the case of the form $\Psi^-$. Firstly, take the form $\Psi^-$.
 Eq.~\eqref{4.8} implies
 \begin{equation}\label{4.9}
   -(p_0x+p_1e_0e_1+p_2e_0e_2+p_3e_0e_3)A=me_0A e_0,
 \end{equation}
 since $e_0e_0=x$.
 By trivial computation one finds that
 \begin{equation*}
   \big(p_0x-\sum_{\mu=1}^3p_\mu e_0 e_\mu\big)\big(p_0x+\sum_{\mu=1}^3p_\mu e_0 e_\mu\big)=\big(p_0^2-\sum_{\mu=1}^3p_\mu^2\big)x.
 \end{equation*}
 Then multiplying both sides of Eq.~\eqref{4.9} by $-(p_0x-p_1e_0e_1-p_2e_0e_2-p_3e_0e_3)$  we obtain

\begin{equation*}
  \big(p_0^2-\sum_{\mu=1}^3p_\mu^2\big)xA=-m\big(p_0x-\sum_{\mu=1}^3p_\mu e_0 e_\mu\big)e_0A e_0,
   \end{equation*}
   or equivalently,
  \begin{equation*}
  \big(p_0^2-\sum_{\mu=1}^3p_\mu^2\big)A=-m\big(\sum_{\mu=0}^3e_\mu p_\mu A\big) e_0.
   \end{equation*}
 Applying  \eqref{4.8} gives
  \begin{equation*}
   \big(p_0^2-\sum_{\mu=1}^3p_\mu^2\big)A=m^2Ae_0e_0,
 \end{equation*}
or equivalently,
 \begin{equation*}
   \big(p_0^2-\sum_{\mu=1}^3p_\mu^2\big)A=m^2A.
 \end{equation*}
 Thus Eq.~\eqref{4.8} has a non-trivial solution if and only if
 \begin{equation}\label{4.10}
   p_0^2-p_1^2-p_2^2-p_3^2=m^2
 \end{equation}
 or
 \begin{equation*}
   p_0=\pm\sqrt{m^2+p_1^2+p_2^2+p_3^2}.
 \end{equation*}
 \begin{prop}
  The form $A\Psi^-$  is a non-trivial solution of the discrete Hestenes equation \eqref{3.4} if and only the condition \eqref{4.10}
  holds.
   \end{prop}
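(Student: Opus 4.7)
The plan is to exploit the algebraic reduction already carried out in the text and to supplement it with a short spectral argument for the existence half. Substituting $\Omega^-=A\Psi^-$ into the discrete Hestenes equation \eqref{3.4}, using the Clifford representation of $d^c+\delta^c$ from Proposition \ref{3.3} and the eigenvalue relation \eqref{4.3}, and then cancelling the factor $e_{12}e_{12}=-x$ on the right, reduces \eqref{3.4} to the purely algebraic equation \eqref{4.8}. Since each factor $x\pm p_\mu e_{12}$ is invertible in the even local algebra at every $k$ (its inverse is a scalar multiple of $x\mp p_\mu e_{12}$, because $e_{12}^2=-x$), the product $A\Psi^-$ vanishes if and only if $A$ does. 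Thus the proposition reduces to: \eqref{4.8} admits a non-zero $A\in K^{ev}(4)$ exactly when \eqref{4.10} holds.

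The necessity half is already contained in the computation preceding the statement. Left-multiplying \eqref{4.8} by the odd element $-(p_0 x-\sum_{\mu=1}^3 p_\mu e_0 e_\mu)$ and then reapplying \eqref{4.8} on the right-hand side yields $(p_0^2-p_1^2-p_2^2-p_3^2)A=m^2A$, so $A\ne 0$ forces \eqref{4.10}.

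For sufficiency, assume \eqref{4.10} and introduce the $\mathbb{R}$-linear operator $T\colon K^{ev}(4)\to K^{ev}(4)$ defined by $T(A)=-\bigl(\sum_\mu p_\mu e_\mu\bigr)Ae_0$. Multiplying \eqref{4.8} on the right by $e_0$ and using $e_0 e_0=x$ shows that \eqref{4.8} is equivalent to the eigenvalue equation $T(A)=mA$. Applying $T$ twice and invoking the Clifford identity $\bigl(\sum_\mu p_\mu e_\mu\bigr)^2=(p_0^2-p_1^2-p_2^2-p_3^2)\,x=m^2 x$ from Proposition \ref{3.3}, I obtain $T^2=m^2 I$ on all of $K^{ev}(4)$. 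Consequently $(T-mI)(T+mI)=0$, so the image of $T+mI$ lies inside the $m$-eigenspace of $T$, and it remains to verify that $T+mI\ne 0$. A short computation on $e_{01}$ (or on $e_{23}$ in the degenerate case $p_1=p_2=p_3=0$, $p_0=-m$) shows that this is always the case on the mass shell, producing a non-zero $A$ with $T(A)=mA$ and hence a non-trivial solution.

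The hard part is the sufficiency step: the relation $T^2=m^2 I$ only confines the eigenvalues of $T$ to $\{\pm m\}$, and one must rule out the pathological possibility $T=-mI$. This is handled by a single explicit basis-element test, but the test has to be chosen so that it succeeds for every on-shell momentum $(p_0,p_1,p_2,p_3)$ satisfying \eqref{4.10}. The underlying Clifford manipulations, particularly the squaring of $\sum_\mu p_\mu e_\mu$ and the commutations needed to evaluate $T$ on a chosen basis element, are routine applications of rule (b) but constitute the bulk of the bookkeeping.
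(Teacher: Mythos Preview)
Your argument is correct and in fact more complete than the paper's. The paper's derivation preceding the proposition handles only the necessity direction: it left-multiplies \eqref{4.9} by the Clifford conjugate $-(p_0x-\sum_{\mu=1}^{3}p_\mu e_0e_\mu)$ and reapplies \eqref{4.8} to obtain $(p_0^2-\sum p_\mu^2)A=m^2A$, exactly as you do. For the existence half the paper simply asserts ``if and only if'' and then relies on the subsequent Theorem~4.3 (the decomposition $A=A_++A_-$ with the relations \eqref{4.14}--\eqref{4.15}) to exhibit the four linearly independent solutions. Your route is genuinely different: you recast \eqref{4.8} as an eigenvalue problem $T(A)=mA$ for $T(A)=-\big(\sum_\mu p_\mu e_\mu\big)Ae_0$ on the $8$-dimensional real space of constant even forms, observe that \eqref{4.10} gives $T^2=m^2I$, and conclude by checking $T\ne -mI$ on a single basis element. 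This spectral shortcut avoids the $A_\pm$ splitting and Lemma~4.2 entirely, at the price of the small case analysis at $p_0=-m$, $p_1=p_2=p_3=0$. Your extra remark that each factor $x-p_\mu e_{12}$ is invertible (so $A\Psi^-=0\iff A=0$) is also something the paper leaves implicit.

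Two small cosmetic points: your references ``Proposition \textbackslash ref\{3.3\}'' point to an equation label, not a proposition; the Clifford representation of $d^c+\delta^c$ is Proposition~3.2, and the squaring identity $\big(\sum_\mu p_\mu e_\mu\big)^2=(p_0^2-p_1^2-p_2^2-p_3^2)x$ follows directly from rule~(b) (or Proposition~3.1), not from Proposition~3.2.
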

  It is clear that the  same is true for  $A\Psi^+$ in place $A\Psi^-$, i.e. if we take Eq.~\eqref{4.7} we obtain the condition \eqref{4.10} again.

  Let $p=\{ p_0, p_1, p_2, p_3\}$ be the energy-momentum vector of a particle with (proper) mass $m$. Then the relation \eqref{4.10} is the energy-momentum relation.
  It is known that in the continuum theory the Hestenes equation \eqref{1.4} admits the plane wave solutions of the form
  \begin{equation*}
  \Phi^\pm=A\exp(\pm \gamma_2\gamma_1p \cdot x).
 \end{equation*}
 Thus the forms $\Omega^\pm=A\Psi^\pm$ are discrete versions of the plane-wave solutions $\Phi^\pm$.

 Let us represent the even real-valued form  \eqref{4.6} as
\begin{equation*}
  A=A_{+}+A_{-},
 \end{equation*}
 where
 \begin{equation}\label{4.11}
   A_{+}=\alpha^{0}x+\alpha^{12}e_{12}+\alpha^{13}e_{13}+\alpha^{23}e_{23},
  \end{equation}
 \begin{equation}\label{4.12}
   A_{-}=\alpha^{01}e_{01}+\alpha^{02}e_{02}+\alpha^{03}e_{03}+\alpha^4e.
  \end{equation}
  It is easy to check that $A_{+}$ commutes with $e_0$  and  $A_{-}$  anticommutes with it, i.e.
  \begin{equation}\label{4.13}
  e_0A_{\pm}=\pm A_{\pm}e_0.
 \end{equation}
 \begin{lem}
  The form $e_{0\mu}A_{-}$ commutes with $e_0$ and has the view    \eqref{4.11},  while
  $e_{0\mu}A_{+}$ anticommutes with $e_0$ and has the view    \eqref{4.12}  for any $\mu=1,2,3$.
  \end{lem}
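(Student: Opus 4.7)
My plan is to reduce the lemma to a single auxiliary fact: the unit form $e_{0\mu}$ anticommutes with $e_0$ for $\mu=1,2,3$. First I will verify this directly from rules (a) and (b): at each site $k$,
\[
e_0^k e_{0\mu}^k = (e_0^k)^2 e_\mu^k = e_\mu^k, \qquad e_{0\mu}^k e_0^k = e_0^k e_\mu^k e_0^k = -(e_0^k)^2 e_\mu^k = -e_\mu^k,
\]
and summing over $k$ (using that products of basis elements at different sites vanish) gives $e_0 e_{0\mu} + e_{0\mu} e_0 = 0$.

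Combining this with the established relation \eqref{4.13} will then immediately yield, for $\mu=1,2,3$,
\[
e_0(e_{0\mu}A_{\pm}) = -e_{0\mu}(e_0 A_{\pm}) = \mp e_{0\mu} A_{\pm}\, e_0,
\]
so $e_{0\mu}A_{-}$ commutes with $e_0$ and $e_{0\mu}A_{+}$ anticommutes with it. This disposes of the commutation half of the statement.

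For the structural half I would argue abstractly. The even part $K^{ev}(4)$ is closed under Clifford multiplication, so $e_{0\mu}A_{\pm}\in K^{ev}(4)$. A short inspection of the eight even basis elements shows that $e_{\mu\nu}$ commutes with $e_0$ precisely when $0\notin\{\mu,\nu\}$, while $e=e_0 e_1 e_2 e_3$ anticommutes with $e_0$ (three sign flips). Hence the subspace of $K^{ev}(4)$ commuting with $e_0$ is exactly the span of the terms appearing in \eqref{4.11}, and the anticommuting subspace is exactly the span appearing in \eqref{4.12}. Together with the commutation relations above, this forces $e_{0\mu}A_{-}$ into the shape \eqref{4.11} and $e_{0\mu}A_{+}$ into the shape \eqref{4.12}. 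The only real obstacle throughout is careful sign bookkeeping when moving $e_0$ past other $e_\nu$; an alternative, more pedestrian route is to compute $e_{0\mu}$ times each of the four basis terms of $A_{\pm}$ explicitly, but this merely substitutes one kind of sign tracking for another.
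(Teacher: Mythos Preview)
Your argument is correct. The paper, however, takes the pedestrian route you mention at the end: it simply computes $e_{01}A_{-}$ and $e_{01}A_{+}$ term by term, displays the resulting linear combinations of $x,e_{12},e_{13},e_{23}$ (resp.\ $e_{01},e_{02},e_{03},e$), and then remarks that the cases $\mu=2,3$ are handled identically. So the paper establishes the ``has the view \eqref{4.11}/\eqref{4.12}'' claim first, by brute force, and the commutation claim is then immediate from \eqref{4.13}.

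Your approach reverses this: you first derive the commutation behavior from the single observation $e_0e_{0\mu}=-e_{0\mu}e_0$ together with \eqref{4.13}, and then use the eigenspace decomposition of the even constant-coefficient forms under conjugation by $e_0$ to conclude the structural claim. This is genuinely cleaner---one argument covers all $\mu$ at once and no explicit products of $2$-forms are needed---at the modest cost of checking which of the eight even unit forms commute with $e_0$. The paper's direct computation, on the other hand, gives the explicit coefficients of $e_{0\mu}A_{\pm}$, which are not needed for the lemma itself but make the subsequent rewriting of \eqref{4.16} as a $4\times4$ linear system more transparent.
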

  \begin{proof} For $\mu=1$ we have
  \begin{eqnarray*}
   e_{01}A_{-}=e_{01}(\alpha^{01}e_{01}+\alpha^{02}e_{02}+\alpha^{03}e_{03}+\alpha^4e)\\
   =\alpha^{01}x-\alpha^{02}e_{12}-\alpha^{03}e_{13}+\alpha^4e_{23}
  \end{eqnarray*}
  and
  \begin{eqnarray*}
   e_{01}A_{+}=e_{01}(\alpha^{0}x+\alpha^{12}e_{12}+\alpha^{13}e_{13}+\alpha^{23}e_{23})\\
   =\alpha^{0}e_{01}-\alpha^{12}e_{02}-\alpha^{13}e_{03}+\alpha^{23}e.
  \end{eqnarray*}
  The same proof remains valid for all other cases.
  \end{proof}
  \begin{thm}
 The forms  $\Omega^\pm=A\Psi^\pm$ are  non-trivial solutions of the discrete Hestenes equation if and only if the conditions
\begin{equation}\label{4.14}
 A_{\pm}=\frac{p_1e_{01}+p_2e_{02}+p_3e_{03}}{m-p_0}A_{\mp}
 \end{equation}
 hold,  or equivalently,
 \begin{equation}\label{4.15}
 A_{\mp}=-\frac{p_1e_{01}+p_2e_{02}+p_3e_{03}}{m+p_0}A_{\pm}.
 \end{equation}
\end{thm}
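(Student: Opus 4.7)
The plan is to reduce the discrete Hestenes equation for $A\Psi^\pm$ to a purely algebraic equation on $A$ alone and then split that equation using the grading supplied by the preceding lemma. Recall that the calculations leading to \eqref{4.7}--\eqref{4.8} already show that $A\Psi^\pm$ solves the discrete Hestenes equation \eqref{3.4} if and only if
\begin{equation*}
\sum_{\mu=0}^{3} e_\mu p_\mu A \;=\; \pm m A e_0,
\end{equation*}
where the upper sign corresponds to $\Psi^+$ and the lower to $\Psi^-$. Multiplying on the left by $e_0$ and using $e_0^2=x$ together with $e_0 e_\mu = e_{0\mu}$ for $\mu\ge 1$, this becomes
\begin{equation*}
\Bigl(p_0 x + \sum_{\mu=1}^3 p_\mu e_{0\mu}\Bigr) A \;=\; \pm m\, e_0 A e_0.
\end{equation*}

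Now I substitute $A = A_+ + A_-$ from \eqref{4.11}--\eqref{4.12}. By \eqref{4.13} together with $e_0^2=x$, one has $e_0 A e_0 = A_+ - A_-$. The preceding lemma tells us that each $e_{0\mu}A_-$ belongs to the $A_+$-sector of type \eqref{4.11} and each $e_{0\mu}A_+$ belongs to the $A_-$-sector of type \eqref{4.12}. Since $K^{ev}(4)$ is the direct sum of these two sectors, equating the $A_+$-type and $A_-$-type parts separately yields the coupled pair
\begin{equation*}
p_0 A_+ + \sum_{\mu=1}^3 p_\mu e_{0\mu} A_- = \pm m\, A_+, \qquad p_0 A_- + \sum_{\mu=1}^3 p_\mu e_{0\mu} A_+ = \mp m\, A_-.
\end{equation*}

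Solving the first of these for $A_\pm$ in the corresponding sign-case (upper for $\Psi^+$, lower for $\Psi^-$) gives exactly \eqref{4.14}, and solving the second gives \eqref{4.15}. All manipulations are reversible, so the conditions are both necessary and sufficient. The one delicate point I foresee is the bookkeeping of signs that links the ``$\pm$'' of $\Psi^\pm$ to the ``$\pm$'' in the statements \eqref{4.14}--\eqref{4.15}; once the identity $\bigl(\sum_{\mu=1}^3 p_\mu e_{0\mu}\bigr)^2 = (p_1^2+p_2^2+p_3^2)\,x$ is noted, the compatibility of \eqref{4.14} with \eqref{4.15} automatically reproduces the energy-momentum relation \eqref{4.10}, so no separate consistency check is required.
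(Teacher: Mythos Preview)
Your argument is correct and follows essentially the same route as the paper: both proofs multiply the algebraic condition \eqref{4.7}--\eqref{4.8} on the left by $e_0$, use \eqref{4.13} to rewrite $e_0Ae_0=A_+-A_-$, and then invoke Lemma~4.2 to separate the $A_+$-- and $A_-$--sectors into the coupled pair that gives \eqref{4.14}--\eqref{4.15}. One small bookkeeping remark on the point you flagged: in the $\Psi^-$ case your ``first'' coupled equation actually produces \eqref{4.15} (lower sign) and the ``second'' produces \eqref{4.14} (lower sign), so the labeling in your penultimate sentence is swapped for the lower sign; the mathematics is unaffected.
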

\begin{proof}
Let $A\Psi^-$ satisfy Eq.~\eqref{3.4}. Then  $A=A_{+}+A_{-}$ satisfies Eq.~\eqref{4.8}:
\begin{equation*}
   -\Big(\sum_{\mu=0}^3e_\mu p_\mu\Big)(A_{+}+A_{-})=m(A_{+}+A_{-})e_0.
 \end{equation*}
 From this we obtain
 \begin{equation*}
   -\Big(\sum_{\mu=1}^3e_0e_\mu p_\mu\Big)(A_{+}+A_{-})=p_0(A_{+}+A_{-})+me_0(A_{+}+A_{-})e_0.
 \end{equation*}
 Applying \eqref{4.13} we can rewrite the above relationship as
 \begin{equation*}
   -\Big(\sum_{\mu=1}^3e_0e_\mu p_\mu\Big)(A_{+}+A_{-})=(p_0+m)A_{+}+(p_0-m)A_{-}.
 \end{equation*}
 By Lemma~4.2, collecting like terms gives
 \begin{equation}\label{4.16}
   -(e_0e_1 p_1+e_0e_2 p_2+e_0e_3 p_3)A_{+}=(p_0-m)A_{-},
 \end{equation}
 \begin{equation*}
   -(e_0e_1 p_1+e_0e_2 p_2+e_0e_3 p_3)A_{-}=(p_0+m)A_{+}.
 \end{equation*}
 Conversely, substituting \eqref{4.14} into \eqref{4.15} yields the condition \eqref{4.10}. It follows that $A\Psi^-$ is  a  non-trivial solution of Eq.~\eqref{3.4}.

The same is true for  $A\Psi^+$ in place $A\Psi^-$.
 \end{proof}
It is clear that the conditions \eqref{4.14} and \eqref{4.15} can be rewritten as  systems of four linear algebraic equations. For example, from \eqref{4.16} we obtain
 \begin{eqnarray*}
(m-p_0)\alpha^{01}-p_1\alpha^{0}-p_2\alpha^{12}-p_3\alpha^{13}=0, \\
  (m-p_0)\alpha^{02}-p_2\alpha^{0}+p_1\alpha^{12}-p_3\alpha^{23}=0, \\
    (m-p_0)\alpha^{03}-p_3\alpha^{0}+p_1\alpha^{13}+p_2\alpha^{23}=0, \\
 (m-p_0)\alpha^4-p_1\alpha^{23}+p_1\alpha^{13}-p_3\alpha^{12}=0.
 \end{eqnarray*}
Thus for given $p_\mu$, $\mu=1,2,3$,  there are four linearly independent solutions of the form \eqref{4.5} for each positive and  negative  $p_0$.

\subsection*{Acknowledgment}
The author would like to thank N. Faustino for valuable discussions.

\end{document}